  \title{On the Capacity of  SWIPT Systems with  \\ a Nonlinear Energy Harvesting Circuit}
\author{\IEEEauthorblockN{Rania Morsi,  Vahid Jamali, and Robert Schober}}
\author{ \IEEEauthorblockN{Rania Morsi\IEEEauthorrefmark{1}, Vahid Jamali\IEEEauthorrefmark{1}, Derrick Wing Kwan Ng\IEEEauthorrefmark{2}, and Robert Schober\IEEEauthorrefmark{1}}
\IEEEauthorblockA{\IEEEauthorrefmark{1}Friedrich-Alexander-University Erlangen-N\"urnberg (FAU), Germany\\} 
 \IEEEauthorblockA{\IEEEauthorrefmark{2}The University of New South Wales, Australia
    \\}
 \\[-4ex]}
\date{\thistime,\,\today}
\theoremstyle{plain}
\newtheorem{corollary}{Corollary}
\newtheorem{theorem}{Theorem}
\theoremstyle{definition}
\theoremstyle{remark}
\newtheorem{remark}{Remark}
\newcommand{\E}{\mathbb{E}}
\newcommand{\e}{{\rm e}}
\newcommand{\dd}{{\rm d}}
\newcommand{\quotes}[1]{``#1''}
\newcommand{\definedas}{\overset{\underset{\Delta}{}}{=}}
\begin{document}

\maketitle
\begin{abstract}
In this paper, we study information-theoretic limits for simultaneous wireless information and power transfer (SWIPT) systems employing a practical nonlinear radio frequency (RF) energy harvesting (EH) receiver. In particular, we consider a three-node system with one transmitter that broadcasts a common signal to separated information decoding (ID) and EH receivers. Owing to the nonlinearity of the EH receiver circuit, the efficiency of wireless power transfer depends significantly on the waveform of the transmitted signal. In this paper, we aim to answer the following fundamental question: What is the optimal input distribution of the transmit waveform that maximizes the rate of the ID receiver for a given required harvested power at the EH receiver? In particular, we study the capacity of a SWIPT system impaired by  additive white Gaussian noise (AWGN) under average-power (AP) and peak-power (PP) constraints at the transmitter and an EH constraint at the EH receiver. Using Hermite polynomial bases, we prove that the optimal capacity-achieving input distribution that maximizes the rate-energy region is unique and discrete with a finite number of mass points. Furthermore, we show that the optimal input distribution for the same problem without PP constraint is discrete whenever the EH constraint is active and continuous zero-mean Gaussian, otherwise. Our numerical results show that the rate-energy region is enlarged for a larger PP constraint and that the rate loss of the considered SWIPT system compared to the AWGN channel without EH receiver is reduced by increasing the AP budget. 
\end{abstract}

\renewcommand{\baselinestretch}{1}
\large\normalsize
\section{Introduction}
\label{s:introduction}
In addition to their capability of transferring information, radio frequency (RF) signals are  a viable energy source that can charge low-power devices, such as wireless sensors and Internet-of-Things devices. This dual capability of RF signals has recently attracted significant attention to  the study of simultaneous wireless information and power transfer (SWIPT) systems. In \cite{Varshney2008}, Varshney showed that there exists a fundamental tradeoff between the rate of information transfer and power transfer. This tradeoff is characterized by the boundary of the so-called rate-energy region \cite{Rate_energy_MIMO_RuiZhang2013}. In order to fully characterize a SWIPT system, it is  essential to accurately model the wireless power transfer (WPT) component of the system. The RF-based  energy harvesting (EH) receiver of a WPT system  comprises a rectenna, i.e., an antenna followed by a rectifier, which converts the received RF signal into a direct-current (DC) signal that can charge low-power devices. Most of the literature on WPT  assumed a linear RF EH receiver model. However, in practice, RF EH rectifiers are usually composed  of a diode and a capacitor, and as a result, their input-output characteristic is highly nonlinear \cite{Optimum_behaviour_Georgiadis2013,Waveform_design_WPT_Clerckx_2016}. In particular, for high incident RF powers at the rectifier's input, the output DC power saturates due to the diode's reverse breakdown leading to a reduced RF-to-DC conversion efficiency. This saturation behaviour has been modelled in \cite{Letter_non_linear} by a parametric nonlinear EH model that accurately matches  measurements from practical RF EH circuits. 

Owing to the rectifier's nonlinearity, the RF-to-DC conversion efficiency of an RF EH receiver depends not only on the strength of the input RF power at the rectifier, but also on the waveform of the transmitted RF signal \cite{Optimum_behaviour_Georgiadis2013,Waveform_design_WPT_Clerckx_2016}. For example, experiments have shown that signals with high peak-to-average power ratio (PAPR), such as multisine and chaotic signals tend to yield higher DC powers for a given average incident RF power compared to constant envelope signals  \cite{Optimum_behaviour_Georgiadis2013}. This is because compared to a constant-envelope signal, for  e.g. a pulsed high-PAPR signal having the same average power, the rectifier's capacitor charges to a higher peak amplitude leading to a higher output DC voltage during the capacitor's discharge time, see  \cite[Figure 9]{Optimum_behaviour_Georgiadis2013}. This interesting observation has motivated the design of  RF transmit waveforms that maximize the harvested DC power of practical nonlinear EH receivers. For example, in \cite{Waveform_design_WPT_Clerckx_2016}, an analytical nonlinear model of the rectenna is introduced and the amplitudes and phases of a deterministic multisine signal are jointly optimized to maximize the harvested DC power for WPT. It is shown that, while a linear EH model favours single carrier transmission, the nonlinear model favours multi-carrier transmission. Moreover, assuming perfect channel knowledge at the transmitter, the harvested DC power increases linearly with the number of frequency tones \cite{Waveform_design_WPT_Clerckx_2016}.

While the waveform design goal for a pure WPT system is to maximize the harvested DC energy only, for a SWIPT system, the waveform design goal is to simultaneously maximize both the information rate and the harvested energy, i.e., to maximize the rate-energy region.  For a linear EH receiver model, the EH constraint concerns only the second-order moment of the input distribution of the transmit waveform. Hence, for a linear EH model, waveforms with Gaussian input distribution are optimal for  maximizing the information rate of a SWIPT system under an EH constraint. However, for a SWIPT system with a nonlinear EH circuit, a fundamental open question that naturally arises for the waveform design problem   is \quotes{\emph{What is the optimal input distribution of the transmit waveform that maximizes the rate-energy region?}} First steps toward answering this question have been made in \cite{SWIPT_Clerckx_OFDM_Multisine2016} and \cite{SWIPT_Clerckx_Single_Carrier2017}. In \cite{SWIPT_Clerckx_OFDM_Multisine2016}, the authors considered
the superposition of a deterministic multisine waveform and a modulated orthogonal frequency division multiplexing waveform and optimized the amplitudes  and phases of the frequency tones to maximize the rate-energy region. Furthermore,  in \cite{SWIPT_Clerckx_Single_Carrier2017},  input distributions  that are fully characterized by their first- and second-order statistics were considered and a truncated Taylor series expansion of the diode's nonlinear characteristic equation was adopted. It was shown that the optimal input distribution under these assumptions  is the zero-mean complex Gaussian distribution with asymmetric power allocation to the real and  imaginary parts. However, the waveforms reported in \cite{SWIPT_Clerckx_OFDM_Multisine2016} and \cite{SWIPT_Clerckx_Single_Carrier2017} are not optimal if the imposed restrictions on the input distributions are removed

In this paper, we aim to answer the above fundamental question without imposing restrictions on the input distribution and adopting the exact form of the diode's  characteristic equation. In particular, we consider a three-node SWIPT system, in which one receiver harvests energy and another separate receiver decodes information from a signal broadcasted by a common transmitter over an additive white Gaussian noise (AWGN) channel. For the EH receiver, we adopt the nonlinear rectenna circuit model from \cite{Waveform_design_WPT_Clerckx_2016,Waveform_optimization_SPAWC_Rui_Zhang_2017}.  Our objective is to find the optimal input distribution of the transmit signal that maximizes the mutual information between the input and the output of the information channel under a minimum harvested power constraint at the EH receiver. Thereby, we impose average-power (AP) and peak-power (PP) constraints at the transmitter. 
We show that the capacity-achieving input distribution that maximizes the rate-energy region under these constraints is unique. Furthermore, we show that this optimal input distribution is \emph{discrete} even if the PP constraint is removed. With a PP constraint, the optimal input distribution possesses a finite number of mass points. This interesting outcome is inline with the classical result for the capacity-achieving input distribution of a discrete-time memoryless channel under AP and PP constraints studied by Smith in \cite{SMITH19712}.

\section{System Model and Preliminaries}
\label{s:system_model}
\subsection{System Model}
\begin{figure}[!t] 
\centering
\includegraphics[width=0.3\textwidth]{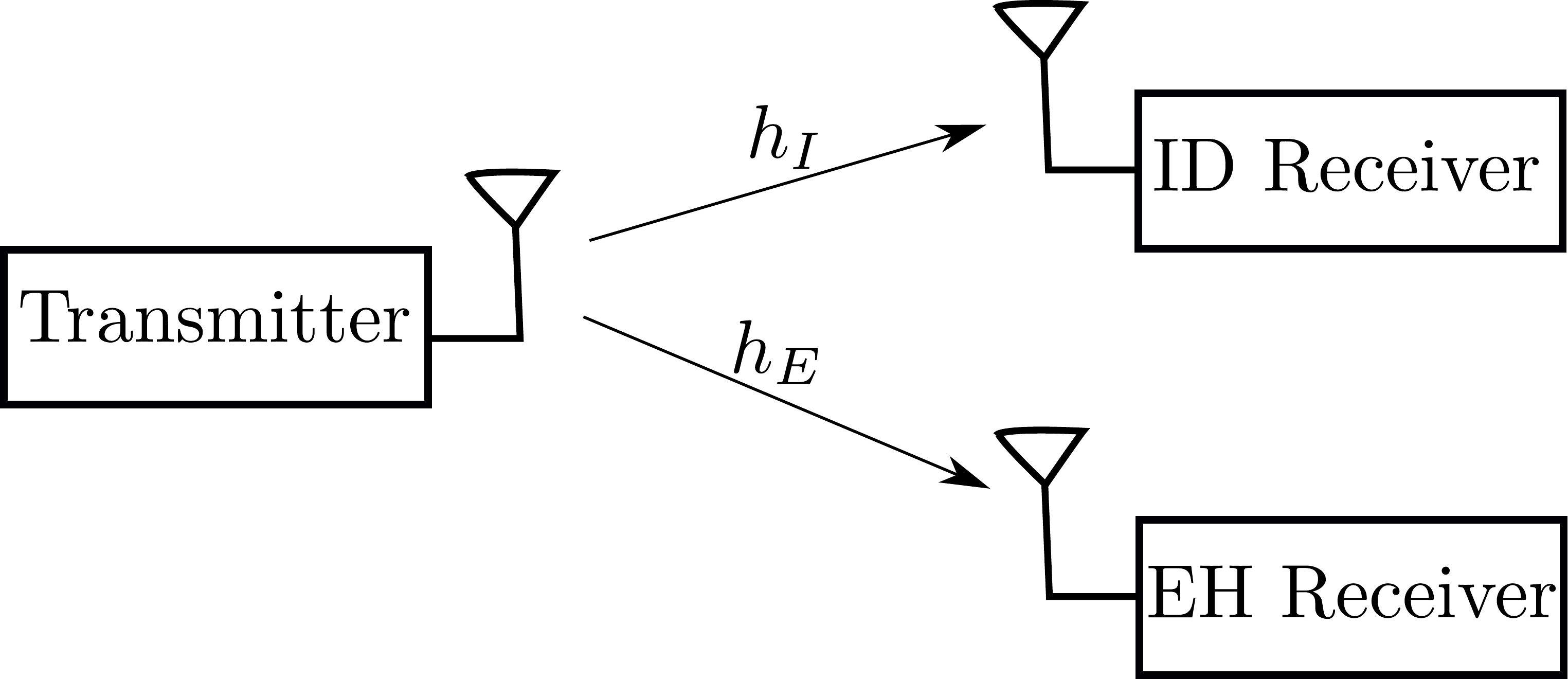}
\caption{A SWIPT system with a separate EH receiver and an ID receiver.}
\label{fig:system_model}
\end{figure}
We consider a three-node single-antenna SWIPT system as shown in Fig. \ref{fig:system_model}, where a transmitter broadcasts a common signal to an information decoding (ID) receiver and an EH receiver. In particular, we consider a time-slotted system with time slot duration $T$. The transmitter transmits a real-valued baseband information-bearing pulse-amplitude modulated signal $x(t)=\sum_{k=-\infty}^{\infty}x[k]g(t-kT)$, where $g(t)$ is the transmit pulse waveform and $x[k]$ is the information symbol in time slot $k$ which is a realization of an independent and identically distributed real-valued\footnote{As customary for capacity analysis, see e.g. \cite{Cover,Nikola_FD}, as a first step,  we  assume real-valued channel inputs and outputs.
The generalization to a complex-valued signal model is relatively straightforward \cite{Cover} but omitted here due to space constraints.} random variable $X\in \mathbb{R}$ having cumulative distribution function $F$. The channel fading gains for  the ID and EH receivers are denoted  by $h_I\in\mathbb{R}$ and $h_E\in\mathbb{R}$, respectively, and are assumed to be flat and fixed over all time slots. Both channel gains are assumed to be perfectly known at the transmitter and the information channel is known at the ID receiver.
The received signal at the ID receiver is $y_I(t)\!=\!x(t)h_I+n(t)$ where $n(t)$ is real-valued zero-mean AWGN.
 At the EH receiver, we ignore the additive noise since its contribution to the harvested DC power is negligible.  Hence, the received signal is $y_E(t)\!=\!x(t)h_E$ in the baseband and $y_E^{\rm RF}(t)\!=\!\sqrt{2}\Re\{y_E(t)\e^{j 2\pi f_c t}\}$ in the RF domain, where $j$ is the imaginary unit, $f_c$ is the carrier frequency, and $\Re\{\cdot\}$ denotes the real part of a complex number.  Next, we focus on the EH receiver and adopt the nonlinear rectenna model from \cite{Waveform_optimization_SPAWC_Rui_Zhang_2017}. In particular, we obtain an expression for the harvested energy at the EH receiver in terms of the input distribution for the considered scalar single-carrier AWGN channel.

\subsection{Rectenna Nonlinear Circuit Model}
\label{ss:rectenna_model} \vspace{-0.1cm}
We adopt the nonlinear rectenna model from \cite{Waveform_design_WPT_Clerckx_2016} and \cite{Waveform_optimization_SPAWC_Rui_Zhang_2017} shown in Fig.~\ref{fig:rectenna_model}.  A rectenna consists of an antenna and a rectifier. The antenna is commonly modelled as an equivalent voltage source $v_s(t)$ in series with an impedance $R_{ant}$. The rectifier typically consists of a single diode followed by a capacitor-based low pass filter (LPF). The received RF signal $y_E^{\rm RF}(t)$ is converted at the rectifier's output to a DC signal across a load resistance $R_L$. We assume perfect impedance matching, i.e., $R_{in}=R_{ant}$ holds, where $R_{in}$ is the equivalent input impedance of the circuit observed after the antenna impedance. Thereby, the average received power is completely transferred to the rectifier, i.e., $\E[|y_E^{\rm RF}(t)|^2]=\E[|v_{in}(t)|^2]/R_{in}$, or equivalently $v_{in}(t)=y_E^{\rm RF}(t)\sqrt{R_{ant}}$, where $\E[\cdot]$ denotes the expectation operator  \cite{Waveform_design_WPT_Clerckx_2016}. The current $i_d(t)$ flowing through an ideal diode is related to the voltage drop, $v_d(t)$, across it by the Shockley diode equation $i_d(t)=i_s\big(\e^{\frac{v_d(t)}{\eta V_T}}-1\big)$, where $i_s$ is the diode's reverse bias saturation current, $\eta$ is the ideality factor which typically lies between 1 and 2, and $V_T$ is the thermal voltage which is approximately $25.85\,$ mV at room temperature. 
By assuming that the capacitance, $c$, of the LPF is sufficiently large, the output voltage can be assumed constant (DC), i.e., $v_{out}(t)\approx v_{out}$ \cite{Waveform_optimization_SPAWC_Rui_Zhang_2017}. Applying Kirchoff's current law to the circuit in Fig. \ref{fig:rectenna_model}, we obtain
\begin{IEEEeqnarray}{ll}
i_d(t)&=i_c(t)+i_{out}(t)=c\frac{\dd v_{out}}{\dd t}+\frac{v_{out}}{R_L} = i_s\big(\e^{\frac{v_d(t)}{\eta V_T}}-1\big),\quad
\end{IEEEeqnarray}
where $v_{out}$ is constant, i.e., $\frac{\dd v_{out}}{\dd t}=0$. Now, using $v_d(t)=v_{in}(t)-v_{out}=y_E^{\rm RF}(t)\sqrt{R_{ant}}-v_{out}$, 
we obtain 
\begin{IEEEeqnarray}{ll}
\frac{v_{out}}{R_L}=i_s\Big(\e^{\frac{-v_{out}}{\eta V_T}}\e^{\frac{y_E^{\rm RF}(t)\sqrt{R_{ant}}}{\eta V_T}}-1\Big),\quad
\end{IEEEeqnarray}
which can be written as $\e^{B y_E^{\rm RF}(t)}=\big(1+\frac{v_{out}}{i_sR_L}\big)\e^{\frac{v_{out}}{\eta V_T}}$, where $B=\frac{\sqrt{R_{ant}}}{\eta V_T}$. Finally, averaging both sides over one symbol duration $T$ and over the input distribution $F$, we obtain  \cite{Waveform_optimization_SPAWC_Rui_Zhang_2017}
\begin{equation}
\E\left[\frac{1}{T}\int_{T}\e^{B y_E^{\rm RF}(t)}\dd t\right]=\left(1+\frac{v_{out}}{i_sR_L}\right)\e^{\frac{v_{out}}{\eta V_T}}.
\label{eq:prop_to_EH}
\end{equation}
 The left hand side (LHS) of (\ref{eq:prop_to_EH}) can be interpreted as the time-average of the moment generating function $\E[\e^{B y_E^{\rm RF}(t)}]$ of random variable $y_E^{\rm RF}(t)$, which is the received RF signal at time instant $t$. 
We note that the DC power delivered to the load is $p_{out}=v_{out}^2/R_L$ and the right hand side of (\ref{eq:prop_to_EH}) strictly increases with $v_{out}$. Hence, imposing a minimum harvested power constraint $p_{out} \geq p_{req}$ is equivalent to imposing a constraint, $E_{req}$, on (\ref{eq:prop_to_EH}), i.e., 
\begin{IEEEeqnarray}{ll}\label{eq:E_req_def}
\E\left[\frac{1}{T}\int_{T}\e^{B y_E^{\rm RF}(t)}\dd t\right] \geq E_{req}\triangleq 
\left(1+\frac{\sqrt{p_{req}}}{i_s\sqrt{R_L}}\right)\e^{\frac{\sqrt{R_Lp_{req}}}{\eta V_T}}.\quad
\end{IEEEeqnarray}
\begin{figure}[!t] 
\centering
\includegraphics[width=0.4\textwidth]{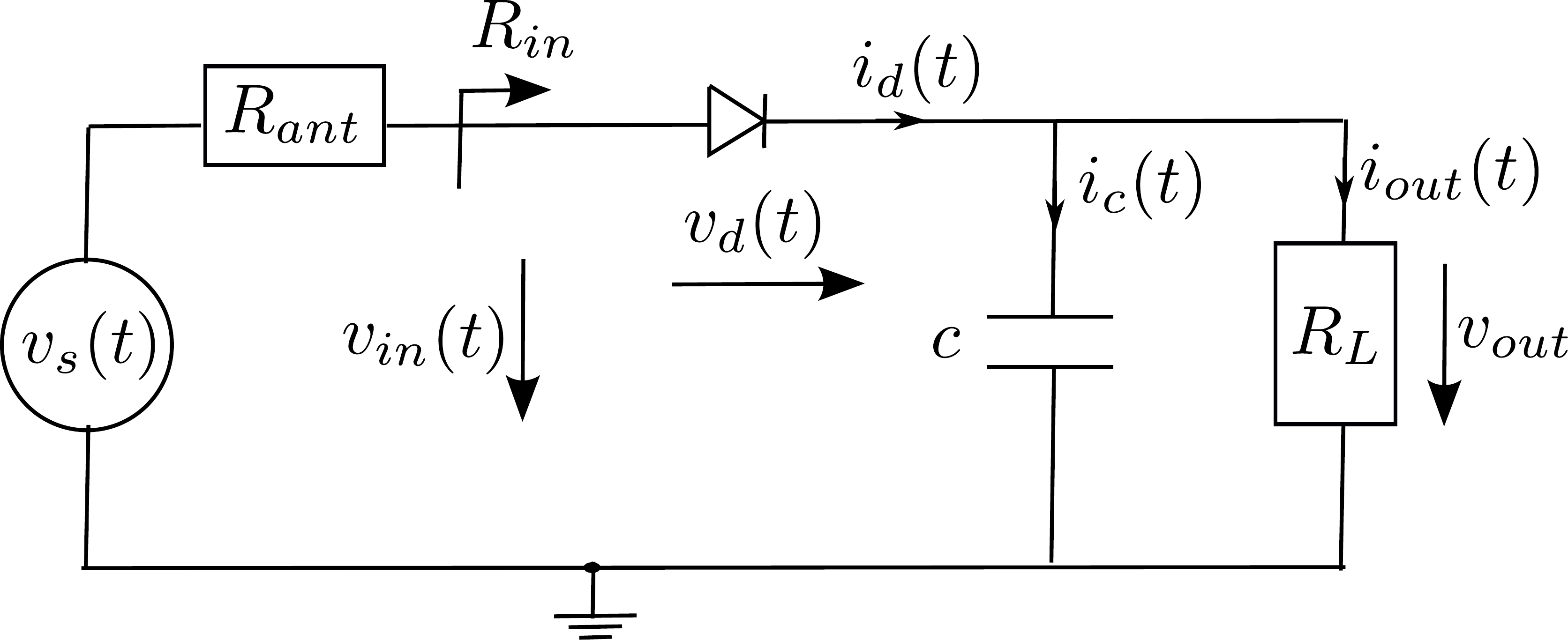}
\caption{Nonlinear rectenna circuit model.\\[-1ex]}
\label{fig:rectenna_model}
\end{figure}
Assuming a rectangular pulse $g(t)$ with unit amplitude and duration $T$, in time slot $k$, i.e., $kT\!-\!T/2\!<\!t\!<\!kT\!+\!T/2$, the baseband transmit signal is constant and given by $x(t)\!=\!\sum_{k=-\infty}^{\infty}x[k]g(t-kT)\!=\!x[k]$. Hence, the received signal in the RF domain  reduces to $y_E^{\rm RF}(t)\!=\!\sqrt{2} x[k] h_E\cos(2\pi f_c t)$, $kT\!-\!T/2\!<\!t\!<\!kT\!+\!T/2$, where the information symbol $x[k]$ is a realization of  random variable $X$ at time slot $k$. Hence, (\ref{eq:prop_to_EH}) can be written as
\begin{IEEEeqnarray}{ll}
\E\left[\frac{1}{T}\!\int_{T}\e^{B y_E^{\rm RF}(t)}\dd t\right]\!&=\!\E\left[\frac{1}{T}\!\int_{T}\e^{\sqrt{2}B X h_E \cos(2\pi f_c t)} \dd t \right] \nonumber\\
&=\E\left[I_0\left(\sqrt{2}B h_E X \right)\right],
\label{eq:EH_eqn}
\end{IEEEeqnarray}
where $I_0(\cdot)$ is the modified Bessel function of the first kind and order zero. In solving the integral in (\ref{eq:EH_eqn}), we assumed that $f_c=m/T$ with integer $m$. 
Using (\ref{eq:EH_eqn}), the EH constraint reduces to
\begin{equation}
\E\left[I_0\left(\sqrt{2}B h_E X \right)\right] \geq E_{req}.
\label{eq:EH_Constraint}
\end{equation}
In particular, for a given required harvested power $p_{req}$, $E_{req}$ is calculated in (\ref{eq:E_req_def}) and the EH constraint in (\ref{eq:EH_Constraint}) is applied in the optimization problem for the optimal input distribution formulated in the next section. Then, with the optimized input distribution of $X$, the LHS of (\ref{eq:EH_Constraint}) is evaluated, which based on (\ref{eq:EH_eqn}) is equivalent to the LHS of (\ref{eq:prop_to_EH}). Using (\ref{eq:prop_to_EH}), $v_{out}$ can be obtained using the bisection method and the  DC power delivered to the load is obtained as $p_{out}=v_{out}^2/R_L$.

In addition to the EH constraint in (\ref{eq:EH_Constraint}), the peak amplitude of the received RF signal $y_E^{\rm RF}(t)$ should be  limited to avoid the breakdown of the rectifying diode \cite{SWIPT_Clerckx_OFDM_Multisine2016}. In particular, we limit the peak amplitude at the EH receiver input to $A_R$, i.e., $\max| y_E^{\rm RF}(t)|=\max|\sqrt{2}Xh_E|\leq A_{R}$, which consequently limits the amplitude of the transmit signal $X$ to $A_R/(\sqrt{2}h_E)$.
\section{Problem Formulation and Solution}
\label{s:Problem_formulation}
In this section, we study the capacity of the considered scalar AWGN information channel under AP and PP  constraints on the transmit signal and an EH constraint at the EH receiver. We first establish the existence of a unique optimal input distribution for the transmitted  information symbols. We then provide necessary and sufficient conditions for the capacity-achieving input distribution. Furthermore, we show that the capacity-achieving input distribution is discrete with a finite number of mass points. \vspace{-0.1cm}
\subsection{Problem Formulation}\vspace{-0.1cm}
The discrete-time baseband model for the information channel after down-conversion, matched filtering, and sampling of the continuous-time signal received at the ID receiver is given by $Y=X h_I+N$, where $N\sim\mathcal{N}(0,\sigma_n^2)$ is the Gaussian distributed noise and $Y$ is the information channel output with probability density function (pdf) $p(y)$. At the transmitter, the peak power is usually limited to avoid the negative impact of amplifier nonlinearities. Hence, we set a maximum transmit amplitude constraint given by $|X|\leq A_{T}$. Recall that $|X|$ is also limited by $A_R/(\sqrt{2}h_E)$ in order to avoid the breakdown of the rectifying diode, cf. Section \ref{ss:rectenna_model}. Hence, the effective amplitude constraint on the transmit signal, i.e., the effective PP constraint, reduces to $|X|\leq\min(A_T,A_R/(\sqrt{2}h_E))\definedas A$. We aim at maximizing the average mutual information between $X$ and $Y$ subject to AP and PP constraints on the transmit symbols $X$ and a minimum harvested power constraint at the EH receiver.  Hence, our optimization problem can be formulated as\vspace{-0.1cm}
\begin{IEEEeqnarray}{llll}
C=&\sup\limits_{F\in\mathcal{F}_A}& &I(F) \nonumber\\
&\,\,{\rm s.t.} &&{\rm C1:}\quad \E[X^2] \leq \sigma_x^2
\nonumber\\
&&&{\rm C2:}\quad \E\left[I_0\left(\sqrt{2}B h_E X \right)\right] \geq E_{req},
\label{eq:capacity_problem}
\end{IEEEeqnarray}
where $\mathcal{F}_A$ is the set of all possible input distribution functions of  random variable $X$ that satisfy the PP constraint $|X|\!\leq\!A$, i.e., $\forall F\!\in\!\mathcal{F}_A$, $\int_{-A}^A\dd F(x)\!=\!1$. $I(F)$ is the mutual information between $X$ and $Y$ achieved by the input distribution $F$ and given by $I(F)=\int_{-A}^A i(x;F)\dd F(x)$, where $i(x;F)$ is the marginal information density defined as $i(x;F)\definedas\int_yp(y|x) \log_2\frac{p(y|x)}{p(y;F)}\dd y$, $p(y;F)$ is the output pdf assuming input distribution $F$, and $p(y|x)$ is the  output pdf conditioned on the transmission of symbol $x$ \cite{SMITH19712}.  $\sigma_x^2$ is the AP budget and $E_{req}$ is the EH constraint, cf. Section \ref{ss:rectenna_model}. For the purpose of exposition, we define $g_1(F)\definedas\int_{-A}^A x^2\dd F(x) - \sigma_x^2$ and $g_2(F)\definedas E_{req}-\int_{-A}^A I_0\left(\sqrt{2}B h_E x\right)\dd F(x)$. Hence, constraints C1 and C2 can be written as $g_i(F)\leq 0$, $i=1,2$. Note that solving (\ref{eq:capacity_problem}) for all possible targeted harvested energies at the EH receiver, $p_{req}$, or the corresponding $E_{req}$ in C2, leads to the rate-energy region of the considered SWIPT system. \vspace{-0.1cm}
\subsection{Properties of the Optimal Input Distribution}
In the following, we investigate some important properties of the optimal input distribution.
\subsubsection{Uniqueness of the Optimal Input Distribution}
We establish the uniqueness of the optimal input distribution for problem (\ref{eq:capacity_problem}) in the following theorem. 
\begin{theorem}\normalfont
The capacity $C$ in (\ref{eq:capacity_problem}) is achieved by a \emph{unique} optimal input distribution function $F_0$, i.e., $C=\sup\limits_{F\in\Omega}I(F)=I(F_0)$, where $\Omega\subset \mathcal{F}_A$ is the set of input distributions that satisfy the PP constraint and constraints C1 and C2 in (\ref{eq:capacity_problem}). Furthermore, there exist $\lambda_1\geq 0$ and $\lambda_2\geq 0$ such that the capacity $C$ is equivalently given by\vspace{-0.2cm}
\begin{equation}
C=\sup\limits_{F\in\mathcal{F}_A} \quad I(F)-\lambda_1 g_1(F) -\lambda_2 g_2(F).\vspace{-0.2cm}
\label{eq:C_lagrangian}
\end{equation}
Moreover, the supremum in (\ref{eq:C_lagrangian}) is also achieved by $F_0$ and
$\lambda_1g_1(F_0)=0$ and $\lambda_2g_2(F_0)=0$.
\label{theo:unique_distribution}
\end{theorem}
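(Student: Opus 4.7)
The plan is to treat (\ref{eq:capacity_problem}) as a convex optimization problem over the space of probability distributions on the compact interval $[-A,A]$, following the general strategy of Smith \cite{SMITH19712} but extending it to handle both the AP and the EH constraint simultaneously. I would equip $\mathcal{F}_A$ with the topology of weak convergence. Since the support $[-A,A]$ is compact, $\mathcal{F}_A$ is convex and weakly compact by Helly's selection theorem. The functionals $g_1(F)=\int_{-A}^A x^2\dd F(x)-\sigma_x^2$ and $g_2(F)=E_{req}-\int_{-A}^A I_0(\sqrt{2}Bh_E x)\dd F(x)$ are affine in $F$, and since their kernels ($x^2$ and $I_0(\sqrt{2}Bh_E x)$) are bounded and continuous on $[-A,A]$, they are weakly continuous. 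Hence the feasible set $\Omega=\mathcal{F}_A\cap\{g_1\leq 0\}\cap\{g_2\leq 0\}$ is a convex and weakly compact subset of $\mathcal{F}_A$.

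Next, I would establish that $I(F)$ is weakly continuous and strictly concave on $\mathcal{F}_A$. Continuity is a consequence of dominated convergence: the conditional density $p(y|x)=\frac{1}{\sqrt{2\pi\sigma_n^2}}\exp(-(y-h_I x)^2/(2\sigma_n^2))$ is bounded and continuous in $x$ on the compact support, so for any weakly convergent sequence $F_n\Rightarrow F$ the output densities $p(y;F_n)$ converge pointwise to $p(y;F)$, and the integrand defining $I(F)$ is uniformly controlled. Concavity of $I(F)$ in $F$ is classical. For \emph{strict} concavity, the key observation is that the linear map $F\mapsto p(y;F)=\int p(y|x)\dd F(x)$ is injective on distributions supported in $[-A,A]$, because convolution with a Gaussian kernel is injective on the space of finite measures; combined with the strict convexity of the output differential entropy in $p(y;F)$, this yields strict concavity of $I(F)$ in $F$. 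Applying Weierstrass' theorem on the compact set $\Omega$ gives existence of a maximizer, and strict concavity together with the convexity of $\Omega$ gives uniqueness of $F_0$.

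For the Lagrangian representation (\ref{eq:C_lagrangian}), I would invoke the infinite-dimensional Kuhn--Tucker / Lagrange multiplier theorem for convex programs: maximizing the concave functional $I(F)$ subject to the affine inequality constraints $g_1(F)\leq 0$, $g_2(F)\leq 0$ over the convex set $\mathcal{F}_A$. Provided that Slater's condition holds---i.e., that there exists $F\in\mathcal{F}_A$ with $g_1(F)<0$ and $g_2(F)<0$, which is satisfied whenever $E_{req}$ lies strictly below the maximum harvested energy achievable under C1 (the boundary case can be handled as a limit)---strong duality guarantees the existence of multipliers $\lambda_1,\lambda_2\geq 0$ such that $F_0$ is also a maximizer of the Lagrangian $L(F,\lambda_1,\lambda_2)=I(F)-\lambda_1 g_1(F)-\lambda_2 g_2(F)$ over the entire set $\mathcal{F}_A$, and the complementary slackness conditions $\lambda_1 g_1(F_0)=0$ and $\lambda_2 g_2(F_0)=0$ hold automatically.

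The main obstacle I anticipate is the rigorous justification of strict concavity of $I(F)$, which reduces to the injectivity of $F\mapsto p(y;F)$ on distributions supported in $[-A,A]$; this relies on the fact that the Gaussian kernel has a nowhere-vanishing characteristic function, so two distinct input distributions cannot produce identical output densities. A secondary technicality is verifying Slater's condition so that the Lagrange multipliers exist as finite nonnegative numbers; this amounts to excluding the degenerate EH-boundary regime where C2 is binding at the unique feasible $F$, which can be separated out as a trivial case.
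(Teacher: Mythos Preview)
Your proposal is correct and follows essentially the same approach as the paper's proof in Appendix~\ref{App:proof_unique_optimal_distribution}: establish that $\Omega$ is convex and (weakly) compact, that $I(F)$ is continuous and strictly concave in $F$, and then invoke the Lagrangian/KKT theorem under Slater's condition. Two minor differences worth noting: you are more careful about strict concavity, explicitly pointing out that the argument ``$h_Y$ is strictly concave in $p(y;F)$ and $p(y;F)$ is linear in $F$'' requires the injectivity of $F\mapsto p(y;F)$ (which the paper asserts without isolating), while the paper is more concrete about Slater's condition, exhibiting an explicit strictly feasible point (a Dirac mass at some $x_1$ with $|x_1|<\sigma_x$ and $I_0(\sqrt{2}Bh_E x_1)>E_{req}$) rather than leaving it as a technicality.
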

\begin{proof}
The proof is provided in Appendix \ref{App:proof_unique_optimal_distribution}.
\end{proof}\vspace{-0.1cm}
\subsubsection{Necessary and Sufficient Conditions for the Optimal Input Distribution}
The following theorem provides a necessary and sufficient condition for the capacity-achieving distribution $F_0$.
\begin{theorem}\normalfont
A necessary and sufficient condition for the input distribution $F_0$ to achieve the capacity $C$ in (\ref{eq:C_lagrangian}) is that $\forall F\in\mathcal{F}_A$, there exist $\lambda_1\geq 0$ and $\lambda_2\geq 0$ such that \vspace{-0.3cm}
\begin{IEEEeqnarray}{ll}
\int\limits_{-A}^A \left[i(x;F_0)-\lambda_1 x^2+\lambda_2I_0(\sqrt{2} B h_E x)\right]\dd F(x) \vspace{-0.2cm}\nonumber\\
\qquad\qquad \leq C-\lambda_1\sigma_x^2+\lambda_2 E_{req}.\label{eq:C_necessary_sufficient_condition1}
\vspace{-0.5cm}
\end{IEEEeqnarray}
\label{theo:C_necessary_sufficient_condition1}
\end{theorem}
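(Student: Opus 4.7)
The plan is to follow the standard convex-analytic characterization of maximizers of concave functionals on convex sets, as in Smith \cite{SMITH19712}. First, I would establish that $I(F)$ is concave in $F$: since the AWGN conditional entropy $h(Y\mid X) = h(N)$ does not depend on $F$, $I(F) = h(Y;F) - h(N)$ and the output density $p(y;F) = \int p(y\mid x)\,\dd F(x)$ is linear in $F$ while $h$ is concave in its argument. The functionals $g_1$ and $g_2$ are affine in $F$, so the Lagrangian $J(F) \definedas I(F) - \lambda_1 g_1(F) - \lambda_2 g_2(F)$ is concave on the convex set $\mathcal{F}_A$, and a standard result from convex analysis says $F_0$ maximizes $J$ iff the weak (Gateaux) derivative $J'(F_0; F-F_0) \leq 0$ for every $F\in\mathcal{F}_A$. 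The key ingredient is the directional-derivative identity $\lim_{\theta\to 0^+}[I((1-\theta)F_0 + \theta F) - I(F_0)]/\theta = \int_{-A}^A i(x;F_0)\,\dd F(x) - I(F_0)$, which can be obtained from the exact equality $\int i(x;F_0)\,\dd F(x) - I(F) = D\bigl(p(\cdot\,;F)\,\|\,p(\cdot\,;F_0)\bigr) \geq 0$ together with a dominated-convergence argument.

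For the necessity direction, I would suppose $F_0$ attains the supremum in (\ref{eq:C_lagrangian}) and invoke Theorem \ref{theo:unique_distribution} to produce non-negative multipliers $\lambda_1,\lambda_2$ together with the complementary slackness conditions $\lambda_1 g_1(F_0) = \lambda_2 g_2(F_0) = 0$ and $I(F_0)=C$. For any $F\in\mathcal{F}_A$, define $F_\theta \definedas (1-\theta)F_0 + \theta F$; dividing $J(F_\theta) \leq J(F_0) = C$ by $\theta$ and letting $\theta\to 0^+$ yields $J'(F_0; F-F_0) \leq 0$. Substituting the directional-derivative identity, using $g_i'(F_0;F-F_0) = g_i(F) - g_i(F_0)$ (from the affinity of $g_1,g_2$), and invoking complementary slackness to replace $\lambda_1\int x^2\,\dd F_0(x) = \lambda_1\sigma_x^2$ and $\lambda_2\int I_0(\sqrt{2}Bh_E x)\,\dd F_0(x) = \lambda_2 E_{req}$, a short algebraic rearrangement produces exactly (\ref{eq:C_necessary_sufficient_condition1}).

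For the sufficiency direction, I would combine the concavity bound $I(F) \leq \int i(x;F_0)\,\dd F(x)$ with the hypothesized inequality (\ref{eq:C_necessary_sufficient_condition1}) to obtain $I(F) \leq C + \lambda_1 g_1(F) + \lambda_2 g_2(F)$ for every $F\in\mathcal{F}_A$. Applying this at the unique optimal $F^*$ furnished by Theorem \ref{theo:unique_distribution} and using its complementary slackness forces $\int i(x;F_0)\,\dd F^*(x) = C = I(F^*)$, whence $D\bigl(p(\cdot\,;F^*)\,\|\,p(\cdot\,;F_0)\bigr) = 0$ and $p(y;F^*) = p(y;F_0)$; because convolution with a non-vanishing Gaussian kernel is injective on probability measures, this forces $F_0 = F^*$, so $F_0$ achieves $C$. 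The main technical hurdle will be rigorously justifying the directional-derivative identity for $I$ and the limit/integration interchange it requires; however, because the PP constraint confines $X$ to the bounded interval $[-A,A]$ and the AWGN channel has finite conditional entropy, this can be handled by a dominated-convergence argument in the spirit of \cite{SMITH19712}.
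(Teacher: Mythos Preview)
Your proposal is correct and follows essentially the same route as the paper: form the Lagrangian $J(F)=I(F)-\lambda_1 g_1(F)-\lambda_2 g_2(F)$, use concavity of $I$ and affinity of $g_i$ on the convex set $\mathcal{F}_A$, and reduce optimality of $F_0$ to the weak-derivative condition $J'_{F_0}(F)\le 0$, with the directional-derivative identity $I'_{F_0}(F)=\int i(x;F_0)\,\dd F(x)-I(F_0)$ and $g'_{i,F_0}(F)=g_i(F)-g_i(F_0)$ plus complementary slackness doing the algebraic work. The one place you diverge is the sufficiency direction: the paper simply invokes the standard convex-analytic fact (cited from \cite{capacity_Rayleigh_fading_Abou_Faycal2001}) that a concave functional on a convex set is maximized at $F_0$ iff all weak derivatives are nonpositive, whereas you take a longer detour through $D\bigl(p(\cdot;F^*)\,\|\,p(\cdot;F_0)\bigr)=0$ and injectivity of Gaussian convolution to identify $F_0$ with the unique optimizer $F^*$. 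Your argument is valid and more self-contained, but the paper's route is shorter because the concavity bound $J(F)-J(F_0)\le J'_{F_0}(F)\le 0$ already gives $J(F_0)\ge J(F)$ directly, without any appeal to identifiability.
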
\vspace{-0.4cm}
\begin{proof}
The proof is provided in Appendix \ref{App:proof_C_necessary_sufficient_condition1}.
\end{proof}
Let us define the points of increase of a distribution function $F$ as those points which have non-zero probability \cite{SMITH19712}.
In the following corollary, we use the condition in (\ref{eq:C_necessary_sufficient_condition1}) to provide a more useful set of necessary and sufficient conditions for the optimal input distribution.
\begin{corollary}\normalfont
Let $E_0$ be the points of increase of a distribution function $F_0$ on $[-A,A]$, then $F_0$ is the optimal input distribution if and only if there exist $\lambda_1\geq 0$ and $\lambda_2\geq 0$ such that 
\begin{equation}
\begin{aligned}
&\lambda_1\!\left(x^2-\sigma_x^2\right)\!-\!\lambda_2\left(I_0\left(\sqrt{2}Bh_E x\right)\!-\!E_{req}\right)+C\\
&\!+\!\frac{1}{2}\!\log_2(2\pi\e\sigma_n^2)\!+\!\!\frac{1}{\sqrt{2\pi\sigma_n^2}}\!\!\int\!\!\!\e^{-\frac{(y-xh_I)^2}{2\sigma_n^2}}\log_2(p(y;F_0))\dd y \geq 0,
\end{aligned}
\label{eq:C_necessary_sufficient_condition2}
\end{equation}
$\forall x\in[-A,A]$, with equality if $x$ is a point of increase of $F_0$, i.e., if $x\in E_0$.
\label{corol:C_necessary_sufficient_condition2}
\end{corollary}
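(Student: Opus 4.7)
The plan is to derive the pointwise characterization in \eqref{eq:C_necessary_sufficient_condition2} from the integral condition of Theorem~\ref{theo:C_necessary_sufficient_condition1} by a standard Dirac-mass test argument, and then to make the marginal information density $i(x;F_0)$ explicit using the Gaussian channel law $p(y|x)$.

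First, I would rearrange \eqref{eq:C_necessary_sufficient_condition1} as $\int_{-A}^{A} s(x)\,\dd F(x)\le 0$ for every $F\in\mathcal{F}_A$, where
\[s(x)\triangleq i(x;F_0)-\lambda_1(x^2-\sigma_x^2)+\lambda_2\bigl(I_0(\sqrt{2}Bh_Ex)-E_{req}\bigr)-C.\]
Testing this inequality with the unit point mass $F=\delta_{x_0}$ at an arbitrary $x_0\in[-A,A]$, which belongs to $\mathcal{F}_A$, yields the pointwise bound $s(x_0)\le 0$. On the other hand, substituting $F=F_0$ and using $I(F_0)=C$ together with the complementary slackness relations $\lambda_1 g_1(F_0)=0$ and $\lambda_2 g_2(F_0)=0$ from Theorem~\ref{theo:unique_distribution} gives $\int_{-A}^{A} s(x)\,\dd F_0(x)=0$. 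Since a non-positive function can integrate to zero against $F_0$ only when it vanishes on every point of increase of $F_0$, this forces $s(x)=0$ on $E_0$. The converse direction is immediate: integrating any $s\le 0$ that vanishes on $E_0$ against an arbitrary $F\in\mathcal{F}_A$ returns the inequality \eqref{eq:C_necessary_sufficient_condition1}, which by Theorem~\ref{theo:C_necessary_sufficient_condition1} certifies optimality of $F_0$.

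It then remains to put $-s(x)\ge 0$ into the explicit form displayed in \eqref{eq:C_necessary_sufficient_condition2}. Since $p(y|x)=\frac{1}{\sqrt{2\pi\sigma_n^2}}\exp\!\bigl(-(y-xh_I)^2/(2\sigma_n^2)\bigr)$ is Gaussian with variance $\sigma_n^2$ independent of $x$, its differential entropy $-\int p(y|x)\log_2 p(y|x)\,\dd y$ equals $\tfrac12\log_2(2\pi\e\sigma_n^2)$ for every $x$. Splitting the logarithm in $i(x;F_0)=\int p(y|x)\log_2\frac{p(y|x)}{p(y;F_0)}\,\dd y$ accordingly gives
\[i(x;F_0)=-\tfrac12\log_2(2\pi\e\sigma_n^2)-\tfrac{1}{\sqrt{2\pi\sigma_n^2}}\int \e^{-(y-xh_I)^2/(2\sigma_n^2)}\log_2 p(y;F_0)\,\dd y,\]
and substituting this expression into $-s(x)\ge 0$ reproduces the left-hand side of \eqref{eq:C_necessary_sufficient_condition2} verbatim. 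The one delicate step is the implication that $s\le 0$ pointwise together with $\int s\,\dd F_0=0$ forces $s\equiv 0$ on $E_0$; this is legitimate because the points of increase of $F_0$ carry the full mass of $F_0$ and $s$ is continuous on the compact interval $[-A,A]$ (continuity is inherited from that of $i(\cdot;F_0)$, $x^2$ and $I_0(\cdot)$).
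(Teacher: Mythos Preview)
Your proposal is correct and follows essentially the same approach as the paper's proof: both start from the integral condition of Theorem~\ref{theo:C_necessary_sufficient_condition1}, test with point masses to obtain the pointwise inequality, use $I(F_0)=C$ together with complementary slackness to force equality on $E_0$, and then substitute the explicit Gaussian form of $i(x;F_0)$. The only cosmetic difference is that the paper frames the equality-on-$E_0$ step as a contradiction argument on a subset $E'\subset E_0$ of positive $F_0$-measure, whereas you argue directly that a continuous non-positive integrand with zero $F_0$-integral must vanish on the support; both justifications are equivalent here.
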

\begin{proof}
The proof is provided in Appendix \ref{App:proof_C_necessary_sufficient_condition2}.
\end{proof}

\subsubsection{Discreteness of the Optimal Input Distribution}
The discreteness of the optimal input distribution $F_0$ for the problem in (\ref{eq:capacity_problem})   is formally stated in the following theorem. 
\begin{theorem}\normalfont
The optimal input distribution that achieves the capacity in (\ref{eq:capacity_problem}) is discrete with finite number of mass points.
\label{theo:discrete_optimal_distribution}
\end{theorem}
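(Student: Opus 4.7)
The plan is to adapt Smith's classical technique \cite{SMITH19712} to the present setting, using the Hermite polynomial basis alluded to in the abstract to accommodate the new Bessel-function term. First, from the equality case of Corollary \ref{corol:C_necessary_sufficient_condition2}, I would define the auxiliary function
\begin{equation*}
s(x) \triangleq \lambda_1(x^2-\sigma_x^2) - \lambda_2\bigl(I_0(\sqrt{2}Bh_E x)-E_{req}\bigr) + C + \tfrac12\log_2(2\pi\e\sigma_n^2) + \tfrac{1}{\sqrt{2\pi\sigma_n^2}}\int \e^{-(y-xh_I)^2/(2\sigma_n^2)}\log_2 p(y;F_0)\,\dd y,
\end{equation*}
which, by Corollary \ref{corol:C_necessary_sufficient_condition2}, satisfies $s(x)\ge 0$ on $[-A,A]$ with equality exactly on the set $E_0$ of points of increase of $F_0$.

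Second, I would extend $s$ to a complex variable $z$ and show it is entire. The polynomial $z^2$ and the Bessel function $I_0(\sqrt{2}Bh_E z)$ are entire on $\mathbb{C}$. For the Gaussian-convolution term, I would expand $\log_2 p(y;F_0)$ in the (probabilist's) Hermite polynomial basis and invoke the reproducing identity $\tfrac{1}{\sqrt{2\pi\sigma_n^2}}\int \e^{-(y-zh_I)^2/(2\sigma_n^2)}\mathrm{He}_n(y/\sigma_n)\,\dd y = (zh_I/\sigma_n)^n$, which turns the convolution into a power series in $z$. Estimating the Hermite coefficients of $\log_2 p(y;F_0)$ via the Gaussian-mixture structure of $p(y;F_0)$, inherited from $F_0$ being supported on $[-A,A]$, would give uniform convergence on compact subsets of $\mathbb{C}$, so that $s(z)$ is entire.

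Third, I would argue by contradiction: if $E_0$ were infinite, then, being a bounded subset of $\mathbb{R}$, it would have an accumulation point in $[-A,A]\subset\mathbb{C}$; since $s$ is entire and vanishes on $E_0$, the identity theorem would force $s(z)\equiv 0$ on $\mathbb{C}$, and in particular on all of $\mathbb{R}$. I would then examine $s(x)$ for large $|x|$. Because $F_0$ has support $[-A,A]$, $\log_2 p(y;F_0) = -y^2/(2\sigma_n^2\ln 2) + O(|y|)$ as $|y|\to\infty$, so the convolution term grows only as $O(x^2)$. When the EH constraint is active ($\lambda_2>0$), the Bessel term $I_0(\sqrt{2}Bh_E x)\sim \e^{\sqrt{2}Bh_E|x|}/\sqrt{2\pi\sqrt{2}Bh_E|x|}$ dominates every other contribution and drives $s(x)\to-\infty$, contradicting $s\equiv 0$. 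When $\lambda_2=0$, the problem reduces to Smith's classical AP/PP setting, and matching Hermite coefficients in $s(x)\equiv 0$ forces $p(y;F_0)$ to be Gaussian, which by uniqueness of Gaussian convolution requires $F_0$ to be a single point mass, again contradicting $E_0$ being infinite. Either way $E_0$ must be finite, yielding the claim.

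The main obstacle I foresee is the analytic-continuation step: proving that the Gaussian-convolution term in $s$ extends to an entire function in $z$. The subtlety is that $\log_2 p(y;F_0)$ can be unbounded below even though $p(y;F_0)$ is bounded, so exchanging sum and integral in the Hermite expansion requires careful majorants. This is precisely where the Hermite polynomial basis referenced in the abstract is essential, since its orthogonality with respect to the Gaussian weight supplies the coefficient estimates and the reproducing identity above controls the growth of each term in $z$ uniformly on compact sets.
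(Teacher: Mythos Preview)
Your proposal is correct and follows the same skeleton as the paper: extend the KKT equality function $s(x)$ from Corollary~\ref{corol:C_necessary_sufficient_condition2} to an entire function $s(z)$, assume $E_0$ has an accumulation point, invoke the identity theorem to get $s\equiv 0$, and derive a contradiction by splitting on whether $\lambda_2=0$ or $\lambda_2>0$. The one substantive difference is how you reach the contradiction when $\lambda_2>0$. The paper matches coefficients in $s(z)\equiv 0$ term by term (Taylor series of $I_0$ against the Hermite expansion of the convolution), writes down the output pdf that would be forced, namely $p(y;F_0)=\exp\bigl(\ln 2\sum_{n\ge 0} q_n y^{2n}\bigr)$ with infinitely many nonzero $q_n$, and argues this cannot be a valid density since it is unbounded. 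You instead stay on the real line and use growth rates: since $F_0$ is supported on $[-A,A]$, $\log_2 p(y;F_0)=-y^2/(2\sigma_n^2\ln 2)+O(|y|)$, so the Gaussian convolution is $O(x^2)$, and the $-\lambda_2 I_0(\sqrt{2}Bh_E x)$ term, being exponentially large, forces $s(x)\to -\infty$, contradicting $s\equiv 0$. Your route is more elementary and sidesteps the somewhat delicate claim that ``some $q_n>0$'' in the paper; the paper's route, on the other hand, makes explicit what the output law would have to be. For $\lambda_2=0$ both arguments coincide: coefficient matching forces $p(y;F_0)$ to be zero-mean Gaussian, hence (by Cram\'er's decomposition) $X$ would have to be Gaussian, which on a bounded support means a point mass, contradicting $|E_0|=\infty$. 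Your phrasing ``single point mass'' is the bounded-support specialization of the paper's ``$X$ must be Gaussian''; both yield the same contradiction.
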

\begin{proof}
The proof is provided in Appendix \ref{App:proof_discrete_optimal_distribution}.
\end{proof}
\begin{remark}
If the EH constraint in problem (\ref{eq:capacity_problem}) is inactive, then the problem reduces to the capacity of an AP and PP constrained AWGN channel, whose optimal input distribution was shown to be discrete with a finite number of mass points by Smith in \cite{SMITH19712}. 
\label{rem:AP_PP_only}
\end{remark}

\begin{corollary}\normalfont
Consider problem (\ref{eq:capacity_problem}) without the PP constraint, i.e., $A\to\infty$ and $X\in\mathbb{R}$ and define $E_{lim}\definedas \e^{B^2h_E^2\sigma_x^2/2} I_0(B^2h_E^2\sigma_x^2/2)$. If $E_{req}\leq E_{lim}$, then the EH constraint is inactive and problem (\ref{eq:capacity_problem}) reduces to the capacity maximization problem of the AP-constrained AWGN channel whose optimal input distribution is known to be the \emph{continuous} zero-mean Gaussian distribution \cite{Cover}.  On the other hand, if $E_{req}>E_{lim}$, then the EH constraint is active and the optimal input distribution of the AWGN channel with AP and EH constraints is discrete. 
\label{coro:AP_EH_only}
\end{corollary}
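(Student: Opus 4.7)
The plan is to split the argument at the threshold $E_{lim}$ and, in both regimes, rely on the characterization of the optimal input distribution provided by Theorem \ref{theo:unique_distribution} and Corollary \ref{corol:C_necessary_sufficient_condition2}, applied in the limit $A\to\infty$.

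\emph{Case $E_{req}\leq E_{lim}$.} The first step is to verify that the zero-mean Gaussian input $X\sim\mathcal{N}(0,\sigma_x^2)$ achieves $\E[I_0(\sqrt{2}Bh_E X)]=E_{lim}$ exactly. Using the integral representation $I_0(z)=\frac{1}{2\pi}\int_0^{2\pi}\e^{z\cos\theta}\,\dd\theta$, the Gaussian MGF $\E[\e^{tX}]=\e^{t^2\sigma_x^2/2}$, the identity $\cos^2\theta=(1+\cos(2\theta))/2$, and the same integral representation of $I_0$ in reverse, one obtains
\begin{equation*}
\E[I_0(\sqrt{2}Bh_E X)]=\frac{1}{2\pi}\int_0^{2\pi}\e^{B^2h_E^2\sigma_x^2\cos^2\theta}\,\dd\theta=\e^{B^2h_E^2\sigma_x^2/2}I_0(B^2h_E^2\sigma_x^2/2)=E_{lim}.
\end{equation*}
Hence the zero-mean Gaussian with variance $\sigma_x^2$ is feasible for (\ref{eq:capacity_problem}) whenever $E_{req}\leq E_{lim}$, and it attains the AP-only AWGN capacity $\frac{1}{2}\log_2(1+h_I^2\sigma_x^2/\sigma_n^2)$, which is an obvious upper bound on $C$. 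By uniqueness in Theorem \ref{theo:unique_distribution}, the Gaussian is the unique optimizer $F_0$.

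\emph{Case $E_{req}>E_{lim}$.} The same identity, combined with monotonicity of $E_{lim}$ in $\sigma_x^2$, rules out every zero-mean Gaussian of variance at most $\sigma_x^2$ as a feasible input, so $F_0$ is not Gaussian. The Lagrange multiplier $\lambda_2$ in (\ref{eq:C_lagrangian}) must therefore be strictly positive, since otherwise (\ref{eq:C_lagrangian}) would reduce to the AP-only problem whose unique maximizer is the infeasible Gaussian. Suppose for contradiction that the set of points of increase $E_0$ has an accumulation point in $\mathbb{R}$. The equality part of Corollary \ref{corol:C_necessary_sufficient_condition2} (which persists as $A\to\infty$ by local optimality of $F_0$, even when the two-sided inequality fails) gives
\begin{equation*}
s(x)\definedas\lambda_2 I_0(\sqrt{2}Bh_E x)-\lambda_1 x^2-\frac{1}{\sqrt{2\pi\sigma_n^2}}\int\e^{-(y-xh_I)^2/(2\sigma_n^2)}\log_2 p(y;F_0)\,\dd y-\kappa=0\quad\forall\,x\in E_0,
\end{equation*}
where $\kappa$ absorbs the $x$-independent constants from (\ref{eq:C_necessary_sufficient_condition2}). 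Mirroring the Hermite-polynomial argument used in the proof of Theorem \ref{theo:discrete_optimal_distribution}, $s(\cdot)$ extends to an entire function on $\mathbb{C}$, and the identity theorem forces $s\equiv 0$ on $\mathbb{R}$. Comparing growth rates as $|x|\to\infty$ then yields a contradiction: since $\lambda_2>0$, one has $\lambda_2 I_0(\sqrt{2}Bh_E x)\sim\lambda_2\e^{\sqrt{2}Bh_E|x|}/\sqrt{2\pi\sqrt{2}Bh_E|x|}$, while $\lambda_1 x^2$ is quadratic and the Gaussian-smoothed log-density is $O(x^2)$ (see below). Hence $s(x)\to+\infty$, contradicting $s\equiv 0$, so $F_0$ must be discrete.

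The principal technical obstacle is the $O(x^2)$ control on the smoothed log-density, which cannot rely on compact support as it does in the proof of Theorem \ref{theo:discrete_optimal_distribution}. I would obtain it via Jensen's inequality: writing $p(y;F_0)=\E_{X\sim F_0}[\phi_{\sigma_n}(y-Xh_I)]$ with $\phi_{\sigma_n}$ the zero-mean Gaussian pdf of variance $\sigma_n^2$, convexity of $\e^{-\cdot}$ combined with $\E[X^2]\leq\sigma_x^2$ gives $p(y;F_0)\geq(2\pi\sigma_n^2)^{-1/2}\e^{-(|y|+|h_I|\sigma_x)^2/(2\sigma_n^2)}$, whence $-\log_2 p(y;F_0)=O(y^2)$. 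Integrating against the conditional pdf of $Y$ given $X=x$ (whose second moment is $h_I^2 x^2+\sigma_n^2$) then yields the required $O(x^2)$ bound, which is overwhelmed by the exponential growth of $\lambda_2 I_0(\sqrt{2}Bh_E x)$.
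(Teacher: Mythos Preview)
Your treatment of the first case is essentially the paper's argument: compute $\E[I_0(\sqrt{2}Bh_E X)]$ for $X\sim\mathcal{N}(0,\sigma_x^2)$, observe that the Gaussian is feasible when $E_{req}\leq E_{lim}$, and conclude it is optimal because it already attains the AP-only capacity. Your use of the integral representation of $I_0$ to evaluate the expectation is a clean touch; the paper just states the result of the integral.

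For the second case your route genuinely differs from the paper's. The paper simply points back to Case~2 of Appendix~\ref{App:proof_discrete_optimal_distribution} (the proof of Theorem~\ref{theo:discrete_optimal_distribution}): there, a Hermite--polynomial expansion of $\log_2 p(y;F_0)$ shows that $s\equiv 0$ would force the output pdf to take the form $\exp\!\big(\ln 2\sum_n q_n y^{2n}\big)$ with infinitely many nonzero $q_n$ (some positive), which is not integrable; the paper then remarks that this argument never uses the value of $A$. You instead reach the contradiction by a growth-rate comparison: with $\lambda_2>0$, the $I_0$ term grows like $\e^{\sqrt{2}Bh_E|x|}$, while the Jensen bound $p(y;F_0)\geq(2\pi\sigma_n^2)^{-1/2}\e^{-(|y|+|h_I|\sigma_x)^2/(2\sigma_n^2)}$ (which uses only the AP constraint) gives $-\log_2 p(y;F_0)=O(y^2)$ and hence an $O(x^2)$ smoothed log-density. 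Your argument is more elementary and makes explicit \emph{why} $A$ is irrelevant---the $O(y^2)$ control comes from the second-moment constraint alone---whereas the paper's Hermite approach is more structural and is economical because it reuses machinery already built for Theorem~\ref{theo:discrete_optimal_distribution}.

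One small quibble: your parenthetical that the equality part of Corollary~\ref{corol:C_necessary_sufficient_condition2} ``persists as $A\to\infty$ \dots\ even when the two-sided inequality fails'' is misleading---the inequality (\ref{eq:C_necessary_sufficient_condition2}) should still hold for all $x\in\mathbb{R}$; what actually requires care when dropping the PP constraint is the compactness step in Theorem~\ref{theo:unique_distribution}, which the paper also glosses over.
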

\begin{proof}
For problem (\ref{eq:capacity_problem}) without PP  constraint,  if the EH constraint C2 is satisfied for a zero-mean Gaussian distribution, i.e., $E_{lim}=\int_{-\infty}^{\infty} I_0(\sqrt{2}Bh_E x)\frac{1}{\sqrt{2\pi\sigma_x^2}}\e^{-x^2/(2\sigma_x^2)}\dd x\!=\!\e^{B^2h_E^2\sigma_x^2/2} I_0(B^2h_E^2\sigma_x^2/2)\!\geq \!E_{req}$ holds, then C2 is inactive and the continuous Gaussian distribution is optimal \cite{Cover}. This can also be verified from Case 1 in Appendix D with $A\!\to\!\infty$. Otherwise, if $E_{req}\!>\!E_{lim}$, then the problem is infeasible with the Gaussian distribution and the optimal input distribution that satisfies the EH constraint is discrete. This is because the proof of the discreteness of the optimal input distribution in Case 2 in Appendix \ref{App:proof_discrete_optimal_distribution}  is independent of the value of $A$.
\end{proof}

\subsection{Optimal Input Distribution as the Solution of (\ref{eq:capacity_problem})}

Here, we explain how the optimal input distribution is obtained from (\ref{eq:capacity_problem}). Note that although we showed that the optimal input distribution is discrete with a finite number of  mass points, cf. Theorem~\ref{theo:discrete_optimal_distribution}, the number and positions of the mass points are not known. To cope with this issue, we discretize the interval $[-A,A]$ with sufficiently small step size $\Delta x$ to obtain the symbol set. Then, we employ a numerical solver such as CVX \cite{CVX} to solve  (\ref{eq:capacity_problem}). Since the optimization problem in  (\ref{eq:capacity_problem}) is convex, the global optimum solution can be found using standard numerical methods and letting $\Delta x\to 0$.
\section{Numerical Results}
\label{s:numerical_results}
In this section, we numerically evaluate  problem (\ref{eq:capacity_problem}) to obtain the capacity of  the considered AWGN channel with AP, PP, and EH constraints. The path loss model is given by $h_r^2=\left(\frac{c_l}{4\pi d_r f_c}\right)^\alpha$ for  $r\in\{I,E\}$, where $c_l$ is the speed of light, $f_c$ is the carrier frequency, $\alpha$ is the path loss exponent, $d_I$ and $d_E$ are the distances between the transmitter and the ID and EH receivers, respectively. We consider a setup with $f_c=2.45\,$GHz, $\alpha=2.5$, $d_I=500\,$m, and $d_E=70\,$m. At the ID receiver, we assume a noise power of $\sigma_n^2=-80\,$dBm. At the EH receiver, we assume the following circuit parameters $R_{ant}=50\,\Omega$, $i_s=100\,\mu$A, $\eta=1.5$, $V_T=25.85\,$mV, and $R_L=10\,$k$\Omega$ \cite{Waveform_design_WPT_Clerckx_2016,Waveform_optimization_SPAWC_Rui_Zhang_2017}.

In Fig. \ref{fig:RE_region_diff_EH_diff_PP_diff_AP}, we plot the rate-energy region of the considered system for different AP and PP constraints. In particular, we obtain the rate-energy region by solving optimization problem (\ref{eq:capacity_problem}) for a given $E_{req}$ corresponding to the targeted harvested DC power $p_{out}$. 
It is observed that, for the considered separated ID and EH receivers with the nonlinear EH model, there is a tradeoff between the information rate transmitted to the ID receiver and the power delivered to the EH receiver. This is because, for a larger required harvested power, the optimal input distribution is such that the transmitter transmits more often with the peak amplitudes $x=\pm A$ and less often in the range  $x\in(-A,A)$. This leads to higher harvested power for the EH receiver at the expense of a lower information rate for the ID receiver. Moreover, it can be observed that the higher the peak-amplitude $A$, the larger the rate-energy region gets. This is because, for a larger peak amplitude, the transmitter has to transmit less often with the peak amplitudes and can more often choose $x\in(-A,A)$ allowing for a higher information rate. In addition, we plot Shannon's capacity limit given by $C=0.5\log_2(1+\sigma_x^2h_I^2/\sigma_n^2)$, which is the capacity of the AWGN channel with AP constraint only. We also plot the capacity of the AWGN channel with AP and PP constraints studied by Smith in \cite{SMITH19712}. For the considered AP constraints, Shannon's and Smith's capacities practically coincide due to the low APs. For this reason, for a given AP constraint, all rate-energy curves for different PP constraints converge to the same point (Shannon's capacity) when the EH constraint is inactive. In other words, if we define $p_{lim}$ as the maximum harvested DC power obtained by setting (\ref{eq:EH_eqn}) to $E_{lim}$, then from  Corollary \ref{coro:AP_EH_only},  when the required harvested power is strictly less than $p_{lim}$, Shannon's capacity is achieved and the harvested DC power with the optimal Gaussian input distribution is $p_{lim}$. 

\begin{figure}[!t] 
\centering\hspace{-0.5cm}
\includegraphics[width=0.51\textwidth]{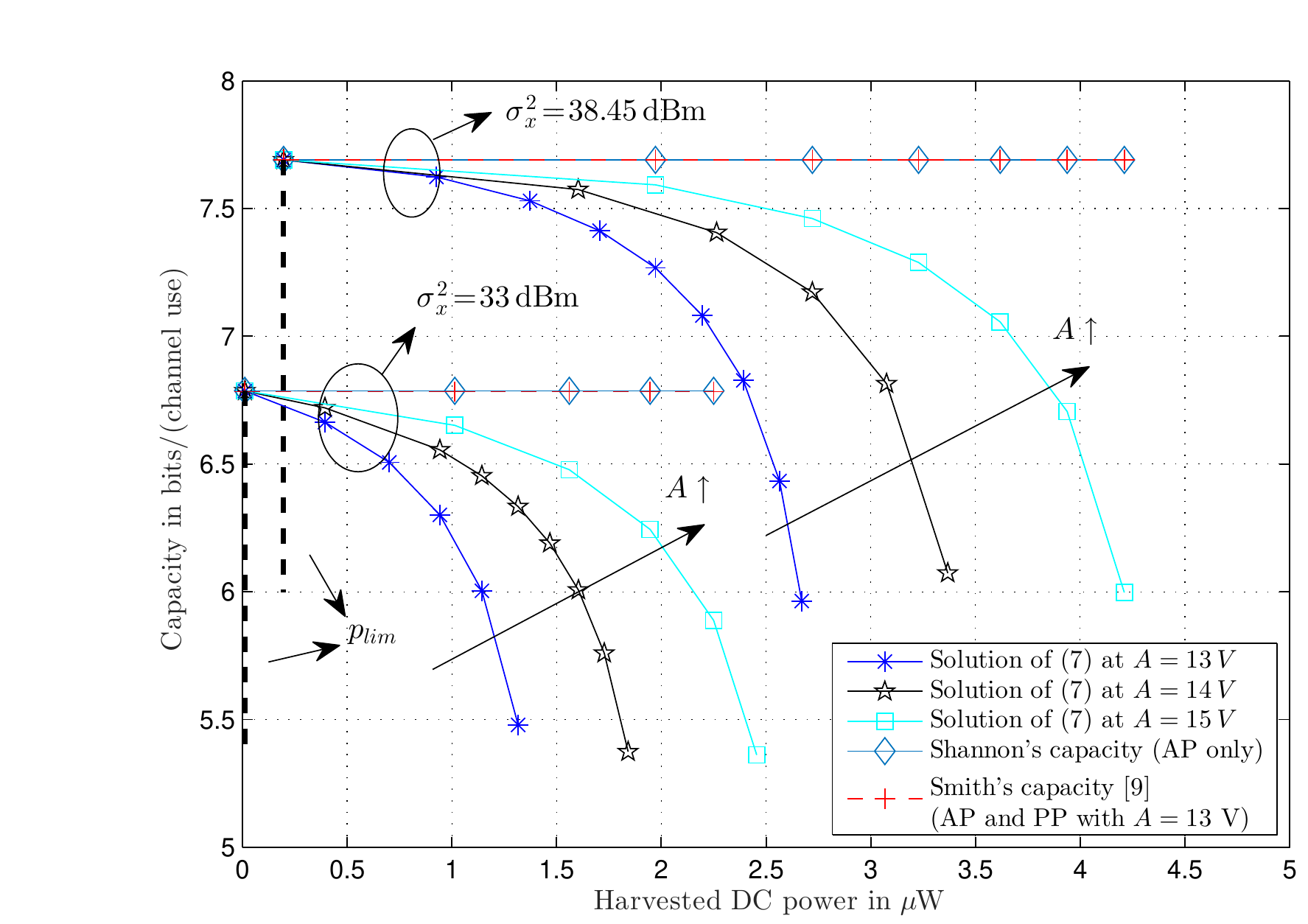}
\caption{Rate-energy region for different AP and PP  constraints.}
\label{fig:RE_region_diff_EH_diff_PP_diff_AP}
\end{figure}

In Fig. \ref{fig:Different_AP_optimal_finite_Shannon_Smith}, we 
plot the capacity according to problem (\ref{eq:capacity_problem})  as a function of  AP constraint $\sigma_x^2$ for a required harvested DC power of $3\,\mu$W and a peak amplitude of $A=13\,$V. For low APs (or equivalently low signal-to-noise ratio (SNR)), the system is EH-limited. In particular, compared to Smith's problem in \cite{SMITH19712} with AP and PP constraints, the imposed EH constraint of our problem in (\ref{eq:capacity_problem}) incurs a capacity loss which decreases with the AP. On the other hand, for large APs  (SNRs), the system is PP limited. That is, the EH constraint is inactive and the capacity of our problem coincides with that of Smith's problem in \cite{SMITH19712}. In addition, we plot the maximum information rate for amplitude shift keying (ASK) modulation. This rate is obtained by solving problem (\ref{eq:capacity_problem}) for symbols $x=\frac{2Ak}{M-1}-A,\,\,k=0,1,\dots,M-1$, where $M$ is the number of symbols. The larger the alphabet size, the closer the capacity achieved by the finite alphabet is to that achieved by the optimal input distribution with arbitrary number of mass points. Moreover, we observe that in the PP-limited regime, the capacities of all PP-constrained schemes saturate with increasing AP budget. 
\begin{figure}[!t] 
\centering\hspace{-0.5cm}
\includegraphics[width=0.51\textwidth]{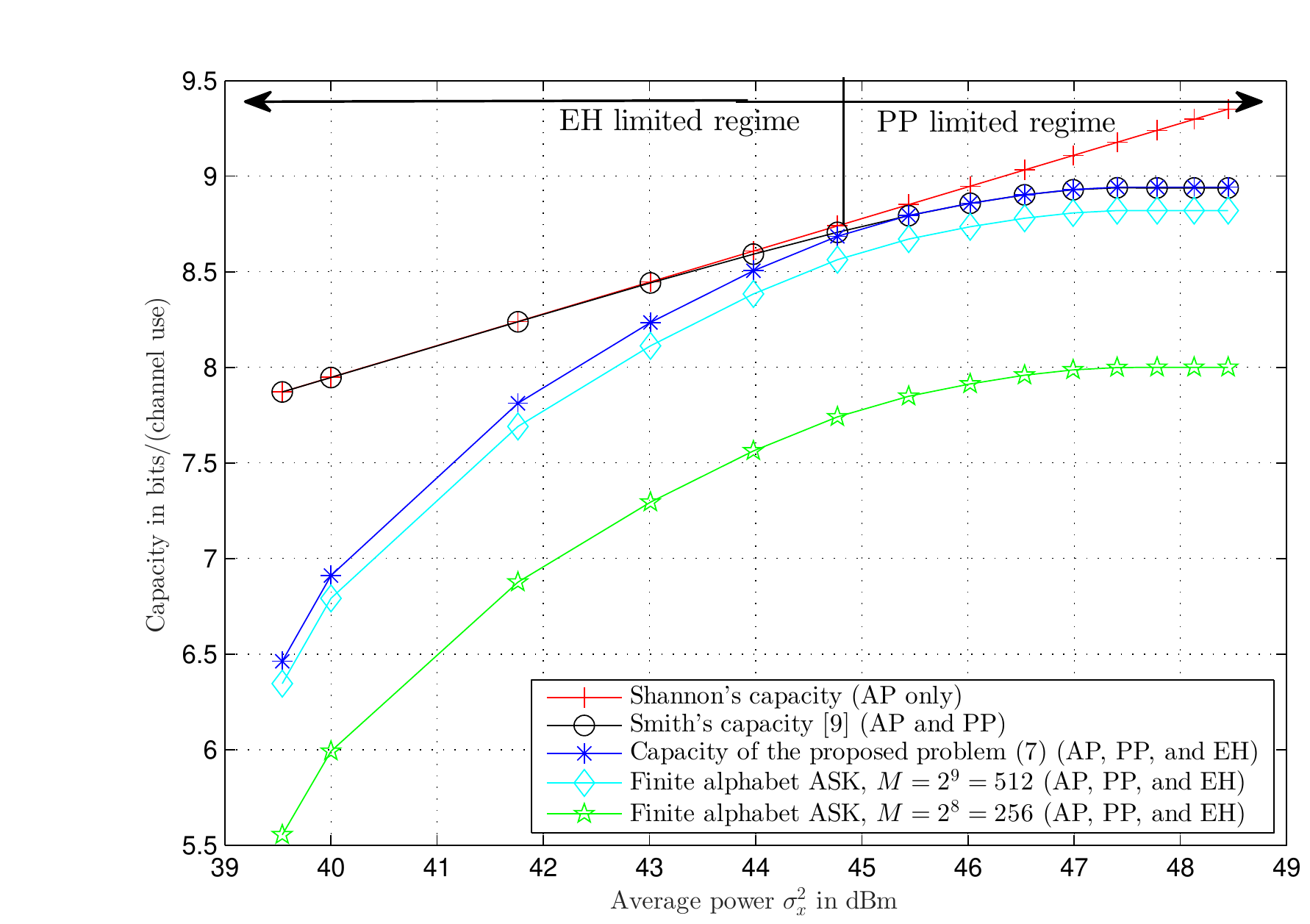}
\caption{Capacity of problem (\ref{eq:capacity_problem})  and for different finite-alphabet sizes for $A=13\,$V and a required harvested DC power of $3\,\mu$W.} 
\label{fig:Different_AP_optimal_finite_Shannon_Smith}
\end{figure}
\section{Conclusion}
\label{s:conclusion}
In this paper, we considered a practical nonlinear RF EH model and studied the AWGN channel capacity of a SWIPT system with separated ID and EH receivers under AP, PP, and EH constraints. We  showed that the capacity-achieving optimal input distribution  that maximizes the rate-energy region  is discrete with a finite number of mass points which is inline with the results obtained for information-only transfer systems in \cite{SMITH19712,capacity_Rayleigh_fading_Abou_Faycal2001,Hermite_bases_Abou_Faycal_2012}.   Furthermore, we proved that the optimal input distribution for the same problem without PP constraint is discrete whenever the EH constraint is active and is continuous zero-mean Gaussian, otherwise. Moreover, we showed that the rate-energy region increases if the PP constraint is relaxed and that the loss in capacity incurred by the EH constraint decreases as the AP budget increases.

\appendices
\section{Proof of Theorem \ref{theo:unique_distribution}}
\label{App:proof_unique_optimal_distribution}
We first prove the existence of a unique distribution $F_0 \in\Omega$ that maximizes the mutual information $I(F)$. It suffices to show that the optimization problem in (\ref{eq:capacity_problem}) is convex, i.e., that the set $\Omega$ is convex and compact in some topology and that $I(\cdot)$ is continuous and strictly concave in $F$.  The convexity of the set $\Omega$ follows from the convexity of the set of distribution functions $\mathcal{F}_A$ (defined by $\int_{-A}^A \dd F(x)=1$) and the linearity of the AP and  EH constraints in $F$.  Hence, constraints $g_i(F)\leq 0$, $i=1,2$, are convex. 
The proof of the compactness of $\Omega$ is similar to that in \cite[Appendix I.A]{capacity_Rayleigh_fading_Abou_Faycal2001}. Next, we show that the mutual information is continuous and strictly concave in $F$. The mutual information resulting from an input distribution $F$ is given by $I(F)=h_Y(F)-h_N$, where $h_Y(F)$ is the entropy of  output $Y$ assuming an input distribution function $F$,  and $h_N$ is the noise entropy which is constant for the considered AWGN channel and given by $h_N=\frac{1}{2}\log_2(2\pi\e\sigma_n^2)$. Since $h_N$ is constant, it suffices to show that $h_Y(F)$ is continuous and strictly concave. The proof of the continuity of $h_Y(F)$ is given in \cite[Appendix I.B]{capacity_Rayleigh_fading_Abou_Faycal2001}. Next, we show that the entropy function $h_Y(F)$ is strictly concave in $F$. Since $h_Y(F)=-\int_{-\infty}^{\infty}p(y; F)\log_2(p(y; F)) \dd y$ is a strictly concave function of the output pdf $p(y;F)$ and $p(y;F)=\int_{-\infty}^{\infty}p(y|x)\dd F(x)$ is a linear function in $F$, it follows that $h_Y(F)$ is a strictly concave function of $F$. Hence, we conclude that problem (\ref{eq:capacity_problem}) is convex and has a unique solution. 

Next, the proof that the capacity $C=\sup_{F\in\Omega}I(F)$ is equivalently given by (\ref{eq:C_lagrangian}) follows from the Lagrangian theorem for constrained optimization. In particular, this equivalence (strong duality) holds for the convex problem in (\ref{eq:capacity_problem}) if $C$ is finite and Slater's condition holds, i.e., there exists an interior point $F\in\mathcal{F}_A$ such that all constraints hold with strict inequality, i.e., $g_i(F)<0$, $i=1,2$. The finiteness of the capacity $C$ is guaranteed by the AP  constraint. Next, we prove that for the considered problem, Slater's condition holds. Let $x_1$ satisfies $|x_1|<\sigma_x<A$ and $I_0\left(\sqrt{2}B h_E x_1\right)>E_{req}$ and let $F_1$ be the unit-step function at $x_1$, then $g_1(F_1)=x_1^2-\sigma_x^2 <0$ and $g_2(F_1)=-I_0\left(\sqrt{2}B h_E x_1\right)+E_{req}<0$, hence Slater's condition holds. From the Lagrangian theorem, we conclude that strong duality holds and there exist $\lambda_1\geq 0$ and $\lambda_2 \geq 0$ such that the expression for the capacity in (\ref{eq:C_lagrangian}) holds and is achieved also by $F_0$. Moreover, the complementary slackness conditions $\lambda_1g_1(F_0)=0$ and $\lambda_2g_2(F_0)=0$ must hold. This completes the proof. \vspace{-0.1cm}

\section{Proof of Theorem \ref{theo:C_necessary_sufficient_condition1}}
\label{App:proof_C_necessary_sufficient_condition1}
Define $J(F)\!\definedas\!I(F)\!-\!\lambda_1 g_1(F)\! -\!\lambda_2 g_2(F)$, then (\ref{eq:C_lagrangian}) can be written as $C=\sup_{F\in\mathcal{F}_A} \, J(F)$.  From  \cite[Theorem 3]{capacity_Rayleigh_fading_Abou_Faycal2001}, if $\mathcal{F}_A$ is convex, and $J(F)$ is concave and weakly differentiable, then $J'_{F_0}(F)\!\leq \!0$ is a necessary and sufficient condition for $J(F)$ to achieve its maximum at $F_0$, where $J'_{F_0}(F)\definedas\lim_{\theta\to 0}\left(J((1\!-\!\theta)F_0\!+\!\theta F)\!-\!J(F_0)\right)/\theta$ is the weak derivative of $J(F)$ at $F_0$. In Appendix \ref{App:proof_unique_optimal_distribution}, we established that $\mathcal{F}_A$ is convex and that $J(F)$ is strictly concave in $F$, since $I(F)$ is strictly concave in $F$ and $g_i(F)$ is linear in $F$ for $i\!=\!1,2$. It remains to be proved that $J(F)$ is weakly differentiable and to determine the derivative $J'_{F_0}(F)\!=\! I'_{F_0}(F)\!-\!\lambda_1 g'_{1,F_0}(F) \!-\!\lambda_2  g'_{2,F_0}(F)$. In \cite[Proof of Theorem 3]{capacity_Rayleigh_fading_Abou_Faycal2001}, it is shown that $I'_{F_0}(F)$ exists and is given by $I'_{F_0}(F)\!=\!\int i(x;F_0)\dd F(x)\!-\!I(F_0)$. It is also shown that for any linear constraint function $g_i(F)$, the derivative is $g'_{i,F_0}(F)\!=\!g_i(F)\!-\!g_i(F_0)$. From the complementary slackness conditions, $g_i(F_0)=0$ must hold since otherwise constraints C1 and C2 will be inactive. Hence, the condition $J'_{F_0}(F)\leq 0$ for the optimality of $F_0$ is $\int i(x;F_0)\dd F(x)\!-C-\lambda_1 g_{1}(F)-\lambda_2 g_{2}(F) \!\leq\! 0$, which reduces to (\ref{eq:C_necessary_sufficient_condition1}). This completes the proof.
\vspace{-0.5cm}
\section{Proof of Corollary \ref{corol:C_necessary_sufficient_condition2}}
\label{App:proof_C_necessary_sufficient_condition2}
We start with the necessary and sufficient condition in (\ref{eq:C_necessary_sufficient_condition1}) that guarantees the optimality of $F_0$. From Appendix \ref{App:proof_C_necessary_sufficient_condition1},  (\ref{eq:C_necessary_sufficient_condition1}) can be written as $\int \!i(x;F_0)\dd F(x)\! -\! C-\sum_{i=1}^2\lambda_i\!g_i(F)\!\leq \!0$. For convenience, we define $g_i(F)\!=\!\int A_i(x)\dd F(x)\! -\!a_i$, $i\!=\!1,2$. Hence, $A_1(x)\!=\!x^2$, $a_1\!=\!\sigma_x^2$, $A_2(x)\!=\!-I_0\left(\sqrt{2}Bh_Ex\right)$, and $a_2\!=\!-E_{req}$. Thus,  (\ref{eq:C_necessary_sufficient_condition1}) can be written as \vspace{-0.2cm}
\begin{equation}
\int \Big(i(x;F_0)-\sum_{i=1}^2\lambda_iA_i(x)\Big)\dd F(x) \leq C-\sum_{i=1}^2\lambda_ia_i.
\label{eq:necessary_condition1_general}
\end{equation}\vspace{-0.1cm}
Next, we prove that (\ref{eq:necessary_condition1_general}) holds if and only if
\begin{equation}
\textstyle{\quad \,\,i(x;F_0)\!\leq\! C\!+\!\sum_{i=1}^2\lambda_i\left(A_i(x)\!-\!a_i\right), \quad \forall x\in[-A,A]},\\[-1ex]
\label{eq:necessary_condition2_general1}
\end{equation}
\vspace{-0.3cm}and\vspace{-0.15cm}
\begin{equation}
\textstyle{i(x;F_0)= C+\sum_{i=1}^2\lambda_i\left(A_i(x)-a_i\right), \quad \forall x\in E_0}.\\[-1ex]
\label{eq:necessary_condition2_general2}
\end{equation}
Clearly, if both conditions (\ref{eq:necessary_condition2_general1}) and (\ref{eq:necessary_condition2_general2}) hold, $F_0$ must be optimal because the necessary and sufficient condition in (\ref{eq:necessary_condition1_general}) is satisfied. Thereby, the converse remains to be proved, i.e., if (\ref{eq:necessary_condition1_general}) holds, (\ref{eq:necessary_condition2_general1}) and (\ref{eq:necessary_condition2_general2}) must also hold. We prove this by contradiction. Assume that  (\ref{eq:necessary_condition1_general}) holds but  (\ref{eq:necessary_condition2_general1}) not. It means that $\exists\, \tilde{x}\in[-A,A]$ such that $i(\tilde{x};F_0)> C+\sum_{i=1}^2\lambda_i\left(A_i(\tilde{x})-a_i\right)$. Now, let $F$ be the unit-step function at $\tilde{x}$, then the left hand side of (\ref{eq:necessary_condition1_general}) becomes $i(\tilde{x};F_0)-\sum_{i=1}^2\lambda_iA_i(\tilde{x})>C-\sum_{i=1}^2\lambda_ia_i$, which violates (\ref{eq:necessary_condition1_general}). Hence, if (\ref{eq:necessary_condition1_general}) holds, (\ref{eq:necessary_condition2_general1}) must also hold. Now, assume that (\ref{eq:necessary_condition1_general}) holds but (\ref{eq:necessary_condition2_general2}) not. That is, assume that for a subset of $E_0$ defined as $E'\!\subset \!E_0$, with positive measure, i.e., $\int_{E'}\dd F_0(x)\!=\!\delta\!>\!0$, (\ref{eq:necessary_condition2_general2}) does not hold. Then, from (\ref{eq:necessary_condition2_general1}), $i(x;F_0)\!< \!C\!+\!\sum_{i=1}^2\!\lambda_i\!\left(A_i(x)\!-\!a_i\right)\!,\, \forall\,x\!\in\! E'$. Now, we can write \vspace{-0.3cm}
\begin{equation}
\begin{aligned}
C-\sum\limits_{i=1}^2 \lambda_i a_i&=I(F_0)-\sum\limits_{i=1}^2\lambda_i\int A_i(x)\dd F_0(x)\\[-1ex]
&=\int\Big(i(x;F_0)-\sum\limits_{i=1}^2\lambda_iA_i(x)\Big)\dd F_0(x),
\end{aligned}
\label{eq:expression_to_violate}
\end{equation}
where we used $C\!=\!I(F_0)$ and that constraints C1 and C2 in (\ref{eq:capacity_problem}) are satisfied with equality for the optimal distribution $F_0$. Since $F_0$ has points of increase on $E_0$ only, we have $\int_{E_0}\dd F_0(x)=\int_{E'}\dd F_0(x)\!+\!\int_{E_0-E'}\!\dd F_0(x)\!=\!\delta\!+\!(1\!-\!\delta)\!=\!1$. Hence, (\ref{eq:expression_to_violate}) reads\vspace{-0.2cm}
\begin{equation*}
\begin{aligned}
&C-\sum\limits_{i=1}^2 \lambda_i a_i=\underset{x\in E'}{\int}\underbrace{\Big(i(x;F_0)-\sum\limits_{i=1}^2\lambda_iA_i(x)\Big)}_{<C-\sum_{i=1}^2 \lambda_i a_i}\dd F_0(x)\\[-1ex]
&+\hspace{-0.3cm}\underset{x\in E_0-E'}{\int}\underbrace{\Big(i(x;F_0)-\sum\limits_{i=1}^2\lambda_iA_i(x)\Big)}_{=C-\sum_{i=1}^2 \lambda_i a_i}\dd F_0(x)
<C-\sum\limits_{i=1}^2 \lambda_i a_i
\end{aligned}
\label{eq:expression_to_violate2}
\end{equation*}
which is a contradiction. Hence, if (\ref{eq:necessary_condition1_general}) holds, (\ref{eq:necessary_condition2_general2}) must also hold. Therefore, (\ref{eq:necessary_condition2_general1}) and (\ref{eq:necessary_condition2_general2}) are  necessary and sufficient conditions for the optimality of the input distribution $F_0$. Next, we obtain  condition (\ref{eq:C_necessary_sufficient_condition2}) from (\ref{eq:necessary_condition2_general1}) and (\ref{eq:necessary_condition2_general2}). By definition, the marginal information density $i(x,F_0)$ is given by \cite{SMITH19712}
\begin{equation*}
\begin{aligned}
&i(x,F_0)=\int_y p(y|x)\log_2\left(\frac{p(y|x)}{p(y;F_0)}\right)\dd y\\[-0.8ex]
&=\int p(y|x)\log_2(p(y|x))\dd y-\int p(y|x)\log_2(p(y;F_0))\dd y\\[-0.8ex]
&=-\frac{1}{2}\log_2(2\pi\e\sigma_n^2)-\!\!\frac{1}{\sqrt{2\pi\sigma_n^2}}\int\e^{-\frac{(y-xh_I)^2}{2\sigma_n^2}}\log_2(p(y;F_0))\dd y 
\end{aligned}
\label{eq:i_x_F0}
\end{equation*}
where the first term on the right-hand side is the negative of the entropy of the noise. Finally, using the definitions of $A_i(x)$ and $a_i$ for $i=1,2$, (\ref{eq:necessary_condition2_general1}) and (\ref{eq:necessary_condition2_general2}) reduce to (\ref{eq:C_necessary_sufficient_condition2}). This completes the proof.
\section{Proof of Theorem \ref{theo:discrete_optimal_distribution}}
\label{App:proof_discrete_optimal_distribution}
Our proof of the discreteness of the optimal input distribution parallels that in \cite[Section IV]{Hermite_bases_Abou_Faycal_2012}. In particular, we show that the equality in (\ref{eq:C_necessary_sufficient_condition2}) cannot be satisfied on a set of points that has an accumulation point, which indicates that the set $E_0$ must be discrete and the optimal input $X$ must be a discrete random variable. We start with the necessary and sufficient conditions for the  optimality of $F_0$ in (\ref{eq:C_necessary_sufficient_condition2}) and extend it to the complex domain, then the LHS of (\ref{eq:C_necessary_sufficient_condition2}) reduces to 
\begin{IEEEeqnarray}{ll}
s(z)=\lambda_1\!\left(z^2-\sigma_x^2\right)\!-\!\lambda_2\left(I_0\left(\sqrt{2}Bh_E z\right)\!-\!E_{req}\right)+C \nonumber\\
+\frac{1}{2}\!\log_2(2\pi\e\sigma_n^2)\!+\!\!\frac{1}{\sqrt{2\pi\sigma_n^2}}\!\!\int\!\!\!\e^{-\frac{(y-zh_I)^2}{2\sigma_n^2}}\log_2(p(y;F_0))\dd y, \quad\,\,\,\,
\label{eq:sz}
\end{IEEEeqnarray}
where $z\in\mathbb{C}$. The extension to the complex domain is necessary to use the identity theorem for analytic functions in complex analysis. In particular, the function $s(z)$ is analytic over the complex domain, since the quadratic function, the modified Bessel function, and the exponential function are all analytic \cite{SMITH19712}. A necessary condition for the optimal input distribution to be $F_0$ is that $s(z)$ must be zero $\forall \,z\in E_0$. But from the identity theorem, if the set $E_0$ has an accumulation point and the analytic function $s(z)=0,$ $\forall z\in E_0$, then $s(z)$ is necessarily zero over the whole complex domain, i.e., $s(z)=0$, $\forall z\in \mathbb{C}$. Next, we show that $s(z)$ cannot be zero, $\forall\, z\in \mathbb{C}$, which implies that $E_0$ cannot have an accumulation point, i.e., $E_0$ must be discrete. 

First, similar to \cite{Hermite_bases_Abou_Faycal_2012}, we set $\sigma_n^2\!=\!1$ to simplify the proof without loss of generality and express the last integral term in (\ref{eq:sz}) in terms of the Hermite polynomials $H_m(y)$ defined in \cite[Appendix F]{Hermite_bases_Abou_Faycal_2012}. In particular, since $\log_2(p(y;F_0))$ is a continuous function of $y$ and is square integrable with respect to  $\e^{-y^2/2}$,  it can be written in terms of the Hermite bases as\vspace{-0.2cm} 
\begin{equation}
\log_2(p(y;F_0))=\sum\limits_{m=0}^\infty c_mH_m(y),\vspace{-0.2cm}
\label{eq:logarithm_Hermite}
\end{equation}
where $c_m$ are constants. Hence, the last term of $s(z)$ in (\ref{eq:sz}) can be written as
\begin{IEEEeqnarray}{ll}
Z&=\frac{1}{\sqrt{2\pi}}\int\e^{-\frac{y^2}{2}}\e^{-\frac{(h_Iz)^2}{2}+h_I zy}\log_2(p(y;F_0)) \dd y \nonumber\\[-1ex]
&=\frac{1}{\sqrt{2\pi}}\int\e^{-\frac{y^2}{2}}\sum\limits_{n=0}^\infty \frac{(h_I z)^n}{n!}H_n(y)\sum\limits_{m=0}^\infty c_mH_m(y) \dd y,\quad
\label{eq:I}
\end{IEEEeqnarray}
where we used the Hermite polynomial expansion $\e^{-\frac{(h_I z)^2}{2}+h_Iz y}\!=\!\sum_{n=0}^\infty \frac{(h_I z)^n}{n!}H_n(y)$  \cite{Hermite_bases_Abou_Faycal_2012}. Next, using the orthogonality property of the Hermite polynomials with respect to $\e^{-y^2/2}$ given by $\int_{-\infty}^\infty H_n(y)H_m(y) \e^{-y^2/2}\dd y\!=\!m!\sqrt{2\pi}$ if $m\!=\!n$ and zero otherwise \cite[Appendix F]{Hermite_bases_Abou_Faycal_2012}, then $Z$ in (\ref{eq:I}) reduces to $Z\!=\!\sum_{m=0}^\infty c_m(h_I z)^m$. Furthermore, using the Taylor series expansion of the modified Bessel function given by $I_0(z)=\sum_{m=0}^\infty \frac{(z/2)^{2m}}{(m!)^2}$, the Bessel function in (\ref{eq:sz}) can be written as $I_0\left(\sqrt{2}Bh_E z\right)=\sum_{m=0}^\infty a_m z^{2m}$, where $a_m=\frac{(Bh_E/\sqrt{2})^{2m}}{(m!)^2}$. Hence, from (\ref{eq:sz}), $s(z)=0$ reduces to 
\begin{IEEEeqnarray}{ll}
\sum\limits_{m=0}^\infty c_m h_I^m z^m
&=\lambda_2\Big(\sum\limits_{m=0}^\infty a_m z^{2m} - E_{req}\Big)\nonumber \\
&-\lambda_1(z^2-\sigma_x^2)-C-\frac{1}{2}\log_2(2\pi\e).
\label{eq:sz_0}
\end{IEEEeqnarray}
Equating the coefficients of $z^m$, we get
\begin{equation}
\begin{aligned}
c_0&=\lambda_2 (a_0-E_{req})+\lambda_1\sigma_x^2-C-0.5\log_2(2\pi\e); \, c_{\rm odd}=0,\\[-1ex]
c_2&=(\lambda_2 a_1-\lambda_1)/h_I^2;\quad  \quad \!\!\, c_m=\lambda_2 a_{\frac{m}{2}}/h_I^m,\,\, \forall\, {\rm even}\, m\geq 4.
\end{aligned}
\label{eq:cm}
\end{equation}
Inserting (\ref{eq:cm}) into (\ref{eq:logarithm_Hermite}), the output pdf reduces to \begin{equation}
p(y;F_0)=\e^{\ln(2)\sum_{n=0}^\infty c_{2n}H_{2n}(y)}.
\label{eq:pdf_output}
\end{equation}
Next, we consider two cases based on whether or not the EH constraint is  active. We will show that in both cases, the optimal input distribution is  discrete with finite number of mass points. \\
\emph{Case 1 ($\lambda_2\!=\!0$):} If the EH constraint is inactive, i.e., C2 in (\ref{eq:capacity_problem}) is satisfied with strict inequality, then  $\lambda_2=0$ from the complementary slackness, cf. Theorem \ref{theo:unique_distribution}. In this case, the coefficients in (\ref{eq:cm}) reduce to $c_0=\lambda_1\sigma_x^2-C-0.5\log_2(2\pi\e)$, $c_2=-\lambda_1/h_I^2$, and $c_m=0,\,\forall\,m\neq\{0,2\}$. Using the Hermite polynomials $H_0(y)=1$, $H_2(y)=y^2-1$  \cite[Appendix F]{Hermite_bases_Abou_Faycal_2012}, the output pdf in (\ref{eq:pdf_output}) reduces to 
\begin{equation}
p(y;F_0)=\e^{\ln(2)(c_{0}-c_2)}\e^{\ln(2)c_2 y^2}.
\label{eq:pdf_AP_PP}
\end{equation}
Since the support of $p(y;F_0)$ is the whole real line $\mathbb{R}$ and $c_2<0$, the output distribution in (\ref{eq:pdf_AP_PP}) is the  Gaussian distribution with zero mean. Now, for $Y$ to be Gaussian distributed in the AWGN channel model $Y=Xh_I+N$, then $X$ must also be Gaussian distributed. However, with the PP constraint $|X|\leq A$, $X$ cannot be Gaussian distributed on a bounded interval. Thus, the output distribution in (\ref{eq:pdf_AP_PP}) is invalid. Hence, the assumption $s(z)=0$ used to obtain (\ref{eq:sz_0}) cannot hold over the whole complex plane, and from the identity theorem, $E_0$ cannot have an accumulation point and the optimal distribution of $X$ must be discrete. \\
\emph{Case 2 ($\lambda_2>\!0$):} In this case, the EH constraint is active, i.e., C2 in (\ref{eq:capacity_problem}) is satisfied with equality and the coefficients $c_m$ are given by (\ref{eq:cm}). From \cite[Appendix F]{Hermite_bases_Abou_Faycal_2012}, the Hermite polynomials of even orders are function of even powers of $y$. Thus, the output distribution in (\ref{eq:pdf_output}) reduces to \vspace{-0.2cm}
\begin{equation}
p(y;F_0)=\e^{\ln(2)\sum\limits_{n=0}^\infty q_{n}y^{2n}}=\prod_{n=0}^\infty\e^{\ln(2)q_ny^{2n}},
\label{eq:py}\vspace{-0.2cm}
\end{equation}
where $q_n$ are non-zero constants. It can be easily verified that for some $n\to\infty$, $\exists\, q_n>0$, in which case $p(y;F_0)$ in (\ref{eq:py}) cannot be a valid distribution since it is unbounded. Hence, we conclude that $s(z)=0,\forall\,z\in\mathbb{C}$, cannot hold and $E_0$ cannot have an accumulation point and must be discrete. 
Finally, the finiteness of the number of mass points in $F_0$ follows from the PP constraint $|X|\leq A$. In particular, a bounded set of discrete points must be finite. This completes the proof. 
\bibliographystyle{IEEEtran}
\bibliography{literature}

\end{document}